\newtheorem{theorem}{Theorem}
\newtheorem{remark}{Remark}
\newtheorem{assumption}{Assumption}
\newtheorem{lemma}{Lemma}
\title{\LARGE \bf Adaptive MPC for Iterative Tasks}
\author{ Monimoy Bujarbaruah, Xiaojing Zhang, Ugo Rosolia,  Francesco Borrelli  % <-this % stops a space
\thanks{M. Bujarbaruah, X. Zhang, U. Rosolia and F. Borrelli are with the Model Predictive Controls Laboratory at the University of California, Berkeley, USA.
E-mails: \tt\scriptsize{ \{monimoy\_bujarbaruah,  xiaojing.zhang, ugo.rosolia, fborrelli\}@berkeley.edu.}} %
%\thanks{$^\star$ These authors contributed equally to this work.
%}
% \thanks{${}^\star$ These authors contributed equally to this work.}
}
\begin{document}

\maketitle
\thispagestyle{empty}
\pagestyle{empty}

%%%%%%%%%%%%%%%%%%%%%%%%%%%%%%%%%%%%%%%%%%%%%%%%%%%%%%%%%%%%%%%%%%%%%%%%%%%%%%%%
\begin{abstract}
This paper proposes an Adaptive Learning Model Predictive Control strategy for uncertain constrained linear systems performing iterative tasks. The additive uncertainty is modeled as the sum of a bounded process noise and an unknown constant offset. As new data becomes available, the proposed algorithm iteratively adapts the believed domain of the unknown offset after each iteration. An MPC strategy robust to all feasible offsets is employed in order to guarantee recursive feasibility. We show that the adaptation of the feasible offset domain reduces conservatism of the proposed strategy, compared to classical robust MPC strategies. As a result, the controller performance improves. Performance is measured in terms of following trajectories with lower associated costs at each iteration. Numerical simulations highlight the main advantages of the proposed approach.
\end{abstract}

%%%%%%%%%%%%%%%%%%%%%%%%%%%%%%%%%%%%%%%%%%%%%%%%%%%%%%%%%%%%%%%%%%%%%%%%%%%%%%%%

\section{Introduction}
Model Predictive Control (MPC) has established itself as a promising tool for dealing with constrained, and possibly uncertain, systems \cite{morari1999model, mayne2000constrained, borrelli2017predictive}. Challenges in MPC design include presence of disturbances and/or unknown model parameters. Disturbances can be handled by means of robust or chance constraints, and such methods are generally well understood \cite{limon2010robust, kothare1996robust, Goulart2006, schwarm1999chance, zhang:margellos:goulart:lygeros:13}. In this paper, we are looking into methods for addressing the second challenge posed by model uncertainties.
 
If the actual model of a system is unknown, adaptive control strategies have been applied for meeting control objectives and ensuring system stability. Literature on unconstrained adaptive control is vast and not the subject of the current paper. Adaptive control for \emph{constrained systems} has mainly focused on improving performance with the adapted models, while the constraints are satisfied robustly for all possible model realizations and for the worst disturbance bounds \cite{aswani2013provably,aswani2012extensions}. In \cite{bujarbaruahm, tanaskovic2014adaptive} a Finite Impulse Response system model is considered for real time adaptation, but this can be  restrictive as it is applicable only for asymptotically stable systems. In \cite{lorenzen2017adaptive}, a state space based Linear Parameter Varying model has been proposed for adaptive MPC with robust constraints. The method modifies the feasible set of states using set translations and scaling \cite{langson2004robust}.
However, this assumes the existence of a unique stabilizing feedback gain matrix for all possible parametric uncertainties, which is potentially a harsh assumption. 
% Moreover, the method is conservative in the way it modifies feasible set of states using set translations and scaling \cite{langson2004robust}. 
In \cite{hewing2017cautious, klenske2016gaussian,  ostafew2014learning}, data driven approaches for learning and adapting system uncertainties are presented with predictive control algorithms, but they do not provide any guarantees of recursive feasibility or stability. 
% Moreover, none of the aforementioned work effectively handles model uncertainty adaptation in an iterative task scenario in presence of constraints. 

In this paper, we tackle the simple, yet insightful problem of regulating a constrained system in presence of  additive model uncertainties. We use a model uncertainty adaptation algorithm inspired by \cite{tanaskovic2014adaptive}, and extend it for systems performing an iterative task. Although control design for systems performing repetitive tasks has been studied \cite{bristow2006survey, carrau2016efficient, rosolia2017learningj}, real-time model adaptation, while guaranteeing satisfaction of state-input constraints, have not yet been addressed thoroughly in literature. 
% adapting the domain set bounds for an additive unknown offset. 

We consider linear time invariant systems represented by a state-space model with known matrices. The system is subject to bounded additive uncertainty, which consists of: $(i)$ a zero mean process noise, that belongs to a known convex set, and $(ii)$ an unknown, but constant offset.
% This decoupling of uncertainty models is often useful in practice, \textbf{e.g, the offset can be a bias acting on the system ???}. 
Given an initial estimate of the offset domain, we iteratively refine it, using a set membership based method \cite{tanaskovic2014adaptive}, as new data becomes available. In order to design an MPC controller with the unknown offset, we make sure the constraints on states and inputs are robustly satisfied for all feasible offsets at a time instant. Here a ``feasible offset" is an offset belonging to the current estimation of the offset domain. 
As the feasible offset domain is updated with data iteratively, we obtain an iterative adaptation in the MPC algorithm. 
% Following the conventional robust MPC methods \cite[Chapter~3]{kouvaritakis2016model}, the system states are decoupled as the sum of a nominal and an error state. 
Moreover, to further improve control performance from one iteration to the next iteration, we use the Learning MPC (LMPC) framework \cite{rosolia2017learningj} to construct the MPC terminal constraints and terminal cost. 
% whereas the \emph{minimal robust positive invariant set} for the corresponding error states is formulated as in \cite{Rosolia2017a}. The latter depends on the size of uncertainty set in the system.
% As the feasible domain for offset is updated with data iteratively, the knowledge of the system uncertainty set is modified, and improved. As a result, the \emph{constraint tightening} \cite[Chapter~3]{kouvaritakis2016model} in our robust MPC controller, which depends on the size of the minimal robust positive invariant set for error states, varies with iterations. This results in iterative adaptation in the algorithm. 

% Enforcement of constraints with these sets ensure recursive feasibility of the resulting controller. 
The main contributions of this paper are summarized as:
\begin{itemize}
    \item We introduce adaptation of additive uncertainty in a robust MPC algorithm, for Linear Time Invariant systems performing iterative tasks. We ensure recursive feasibility and robust stability of the resulting controller. 
    
    \item We show that model adaptation iteratively  \textcolor{black}{relaxes the bounds of the imposed state/input constraints, as more data is collected over time. Due to this relaxation, the optimal control problem in a new iteration can lower the incurred cost further, thus iteratively reducing the \emph{conservatism} in the algorithm.} 
    
    \item We demonstrate performance improvement with our adaptive algorithm, by comparing numerical simulation results to the results of \cite{Rosolia2017a}. 
    %The lower conservatism in our algorithm enables a more aggressive exploration of the state-space, and lower costs at each iteration.

    % \item We merge our adaptive algorithm with the LMPC framework for iterative tasks.   
\end{itemize}

This paper is organized as follows: Section~\ref{sec:ProblemDefinition} introduces the problem setup. Section~\ref{sec:adap_learn_mpc} formulates the Adaptive Learning Model Predictive Control (ALMPC) algorithm basics, outlining the control policy and the solved finite horizon robust optimal control problem, which iteratively improves using  LMPC strategy. We also elaborate the main idea of recursive uncertainty adaptation in this section. 
% Here we also formulate the 
% Section~\ref{sec:model_adap} formulates the model uncertainty adaptation framework. 
% where Section~\ref{ssec:uncer_set_adap} includes the algorithm for recursive model uncertainty adaptation. 
% The ALMPC algorithm is presented in Section~\ref{sec:ALMPC}, where we define the finite horizon robust optimal control problem to be solved.  
% We also formulate the update of disturbance invariant set in Section~\ref{ssec:mrpi_adap} with every iteration and prove recursive feasibility of our algorithm in Section~\ref{ssec:rec_feas}. 
Properties of the proposed ALMPC algorithm are explained in Section~\ref{sec:advant_adap}, where we also prove recursive feasibility and robust stability of the control algorithm.  Section~\ref{sec:example} presents numerical simulations and comparisons, and Section~\ref{sec:conclusion} concludes the paper and outlines possible future extensions.

%%%%%%%%%%%%%%%%%%%%%%%%%%%%%%%%%%%%%%%%%%%%%%%%%%%%%%%%%%%%%%%%%%%%%%%%%%%%%%%%
\section{Problem Statement}\label{sec:ProblemDefinition}
Given an initial state $x_S$, we consider uncertain linear time-invariant systems of the form %% $x_{k+1}=A x_k + B u_k + E w_k$,
\begin{equation}\label{eq:uncSystem}
x_{t+1}=A x_t + B u_t + E\theta_a + w_t,\quad x_0 = x_S,
\end{equation}
where $x_t\in \mathbb{R}^{n_x}$ is the state at time $t$, $u_t\in\mathbb{R}^{n_u}$ is the input, and $A$ and $B$ are known system matrices. At each time step $t$, the system is affected by a random process noise $w_t \in \mathbb W \subset \mathbb{R}^{n_w}$, \textcolor{black}{which is considered to have a zero mean}. In this paper, $\mathbb W$ is assumed to be a compact convex polytope.  % There is also an uncertain, but constant offset $E\theta_a$ in the system 
\textcolor{black}{We also consider the presence of a constant offset $\theta_a \in \mathbb{R}^{p}$, which affects the state dynamics through the known matrix $E \in \mathbb{R}^{n_x \times p}$.} We consider constraints of the form %$F x_t + G u_t \leq f$, 
\begin{equation}\label{eq:constraints_nominal}
	F x_t + G u_t \leq f,
\end{equation}
which must be satisfied for all uncertainty realizations $w_t\in\mathbb{W}$.
% , and all possible offsets $E\theta_t \in \mathbb{D}_t$. 
The matrices $F \in\mathbb{R}^{n_f\times n_x}$, $G \in\mathbb{R}^{n_f \times n_u}$ and $f\in\mathbb{R}^{n_f}$ are assumed known.  
%The goal is to iteratively drive system \eqref{eq:uncSystem} to a neighborhood of the origin. At the $j$-th iteration, let the vector
%\begin{subequations}
%    \begin{align}\label{eq:Statesequence}
%    \mathbf u^{(j)} &:= [u_0^{(j)},\ u_1^{(j)},\ u_2^{(j)},\ \ldots] \\
%    \mathbf x^{(j)} &:= [x_0^{(j)},\ x_1^{(j)},\ x_2^{(j)},\ \ldots],
%\end{align}
%\end{subequations}
%be the collection of inputs applied to system~(\ref{eq:uncSystem}) and the corresponding
%state evolution, . We assume that the closed loop trajectories start from the same initial state $x_S$, (i.e. $x_0^{(j)} ~ = x_S, ~\forall j \geq 0$).
\textcolor{black}{Throughout the paper, we assume that system (\ref{eq:uncSystem}) performs the same task repeatedly. Each task execution is referred to as \emph{iteration}. At each $j^{\textnormal{th}}$ iteration, the performance of the $j^{\textnormal{th}}$ realized closed loop trajectory is quantified with the \emph{iteration cost}
\begin{equation}
    V^{(j)} = \sum_{t= 0}^{\infty} \ell(x_t^{(j)}, u_t^{(j)})
\end{equation}
where $x_t^{(j)}$ and $u_t^{(j)}$ denote the realized system state and the control input at time $t$, respectively, and $\ell: \mathbb{R}^{n_x}\times\mathbb{R}^{n_u} \to \mathbb{R}_+$ is a positive definite stage cost.}
% \textbf{FORMAL DEFINITION OF TASK AND TASK COST}

Our goal is to design a controller that, at each  iteration $j$, solves the infinite horizon robust optimal control problem:
\begin{equation}\label{eq:generalized_InfOCP}
	\begin{array}{clll}
		\hspace{0cm} V^{(j),\star}(x_S) = \\ [1ex]
		\displaystyle\min_{u_0^{(j)},u_1^{(j)}(\cdot),\ldots} & \displaystyle\sum\limits_{t\geq0} \ell \left( \bar{x}_t^{(j)}, u_t^{(j)}\left(\bar{x}_t^{(j)}\right) \right) \\[1ex]
		\text{s.t.}  & x_{t+1}^{(j)} = Ax_t^{(j)} + Bu_t^{(j)}(x_t^{(j)}) + E\theta_a +w_t^{(j)},\\[1ex]
		& Fx_t^{(j)} + Gu_t^{(j)} \leq f,\ \forall w_t^{(j)} \in \mathbb W,\ \\[1ex]
		&  x_0^{(j)} = x_S,\ t=0,1,\ldots,
	\end{array}
\end{equation}
% \textbf{In this problem theta is not defined.. either you are formal  about it (I prefer this) or you say ``we will clarify how theta is handled and modeled later.." }
\textcolor{black}{where $\theta_a$ is the constant offset present in the system}  and $\bar{x}_t^{(j)}$ denotes the disturbance-free nominal state. Notice that \eqref{eq:generalized_InfOCP} minimizes the nominal cost. We point out that, as  system \eqref{eq:uncSystem} is uncertain, the optimal control problem \eqref{eq:generalized_InfOCP} consists of finding $[u_0^{(j)},u_1^{(j)}(\cdot),u_2^{(j)}(\cdot),\ldots]$, where $u_t^{(j)}: \mathbb{R}^{n_x}\ni x_t^{(j)} \mapsto u_t^{(j)} = u_t^{(j)}(x_t^{(j)})\in\mathbb{R}^{n_u}$ are state feedback policies. 
%\textcolor{blue}{As initial realized state $x_0^{(j)}$ is known exactly, hence $u_0^{(j)}$ is not a feedback policy to be computed.}
%\textbf{explain why $u_0$ is not a policy. ALSO if each iteration is infinite time need to explain  how can you get many iterations...}
\textcolor{black}{It is important to note that in practical applications, each iteration has a finite time duration. However, it is common in literature to adopt an infinite time formulation at each iteration, for the sake of simplicity \cite{rosolia2017learningj}, as done in this paper.}

\textcolor{black}{In this paper, we try to compute a solution to the infinite time optimal control problem (\ref{eq:generalized_InfOCP}), by solving a finite time constrained optimal control problem. Moreover, we assume that offset $\theta_a$ in (\ref{eq:generalized_InfOCP}) is not known exactly. Therefore, we propose a parameter estimation framework to refine our knowledge of $\theta_a$ and thus, improve controller performance. We also guarantee recursive constraint satisfaction for all possible offsets.}  

% In the following section, we present the main idea used in this paper to solve (\ref{eq:generalized_InfOCP}) in presence of the unknown offset $\theta_a$. 

%%%%%%%%%%%%%%%%%%%%%%%%%%%%

\section{Adaptive Learning MPC }\label{sec:adap_learn_mpc}

\textcolor{black}{We design a finite horizon MPC controller in an attempt to solve (\ref{eq:generalized_InfOCP})
% (\textbf{ 3 cannot be solved.. not formulated properly}) 
for regulating (\ref{eq:uncSystem}) to the origin.} The key  idea is to use data to learn the unknown offset $\theta_a$, while exploiting the iterative nature of the problem to improve the performance at the next iteration. In particular:
%We elaborate these in the following steps:
\begin{enumerate}

\item The knowledge of the potential domain of offset $\theta_a$ is refined in every iteration, using a set membership based approach \cite{tanaskovic2014adaptive,bujarbaruahm}, giving rise to an Adaptive MPC algorithm. Due to this uncertainty adaptation, \textcolor{black}{and resulting constraint relaxation, the optimal control problem in a new iteration can lower the incurred cost further, thus iteratively reducing \emph{conservatism}.}
% Therefore, the minimal robust positive invariant set of error state is adapted after each iteration, using the updated domain of offset.  

\item We combine the above uncertainty adaptation with a technique called Learning Model Predictive Control (LMPC) \cite{rosolia2017learningj}, to design an \emph{Adaptive Learning MPC} (ALMPC) controller, whose performance improves iteratively by learning the terminal cost and terminal constraint in the MPC algorithm.

% Using ideas from Learning MPC \cite{Rosolia2017a}, we exploit the iterative nature of the task to design controllers whose performance improve over time. 

% We follow a robust MPC design approach along with a Learning LMPC (LMPC) \cite{rosolia2017learningj} algorithm, for robust constraint satisfaction at each time step, for all possible offsets and process noise. The LMPC learns the terminal cost and constraints to yield performance improvement iteratively. 

\end{enumerate}
% Thus, adaptation iteratively lowers constraint tightening for nominal states in tube MPC design, and hence, controller delivers further performance improvements under lesser uncertainties. We also prove recursive feasibility and robust stability of the resulting algorithm.

% \section{Robust LMPC for Uncertain Systems}\label{sec:RLMPC}

% \section{Formulating model uncertainty adaptation}\label{sec:model_adap}

In the following sections, we present the detailed mathematical formulations to achieve steps $(1)$ and $(2)$.

\subsection{Estimating $\theta_a$}\label{ssec:uncer_set_adap}

We characterize the knowledge of the offset $\theta_a$ by its domain $\Theta$, called the \emph{Feasible Parameter Set}, which we estimate from previous system data. 
% The set of all possible offsets $\theta_a$ is called the Feasible Parameter Set \cite{bujarbaruahm}. 
\textcolor{black}{This is initially chosen as a polytope
% an overestimated polytope \textbf{overestimated polytope??? not formal math language !} 
$\Theta^{(0)}$, and it is assumed that $\theta_a \in \Theta^{(0)}$ for all times.} The set is then updated at the end of each iteration after gathering input-output data for all time steps. \textcolor{black}{For} the $j^{\textnormal{th}}$ iteration, the updated \emph{Feasible Parameter Set}, denoted by $\Theta^{(j)}$, is given by
\begin{multline} \label{eq: fps_defin}
    \Theta^{(j)} =  \{\theta \in \mathbb{R}^p: x^{(i)}_{t}-Ax^{(i)}_{t-1}-Bu^{(i)}_{t-1}-E\theta \in \mathbb{W} \\ \forall i \in [0, \ldots,j-1], \forall t \geq 0\}.
\end{multline}
It is clear from (\ref{eq: fps_defin}) that as iterations go on, new data is progressively added to improve the knowledge of $\Theta$, without discarding previous information. Knowledge from all previous iterations is included in $\Theta^{(j)}$. Thus, updated Feasible Parameter Sets are obtained with intersection operations on polytopes, and so $\Theta^{(j+1)} \subseteq \Theta^{(j)}$. 
% We now have the elements to introduce the Adaptive Learning MPC algorithm in the following section.

%%%%%%%%%%%%%%%%%%%%%%%%%%%%%%%%%%%%
\subsection{MPC Problem}\label{sec:ALMPC}
In this section, we present the proposed ALMPC algorithm.  

\subsubsection*{Control Policy Approximation}\label{ssec:con_pol}
We consider affine state feedback policies of the form \cite{bemporad2003min, kouvaritakis2016model}
\begin{equation}\label{eq:inputParam}
	u^{(j)}_t(x_t) = K \left( x^{(j)}_t - \bar{x}^{(j)}_t \right) + v^{(j)}_t,
\end{equation}
where $K\in\mathbb{R}^{n_u \times n_x}$ is a fixed feedback gain same for all iterations $j$. $\bar{x}^{(j)}_t$ is the nominal state and $v^{(j)}_t$ is an auxiliary control input. The parametrization \eqref{eq:inputParam} allows us to decouple the state dynamics \eqref{eq:uncSystem} into a nominal state $s^{(j)}_t \equiv \bar{x}^{(j)}_t$ and an error state $e^{(j)}_t = x^{(j)}_t - s^{(j)}_t$, whose respective dynamics are given by
\begin{subequations}\label{eq:dec_dyn}
    \begin{align}
    	s^{(j)}_{t+1} &= A s^{(j)}_t + Bv_t^{(j)},\quad s^{(j)}_0 = x_S, \quad j=0,1,\ldots  	 \label{eq:s_nominalDyn}\\
    	e^{(j)}_{t+1} &= \Psi e^{(j)}_t + E\theta_a + w_t,~ e^{(j)}_0=0,  \quad j=0,1,\ldots, \label{eq:e_errorDyn}
    \end{align}
\end{subequations}
where $\Psi := (A+BK)$. Above, the state $s_t^{(j)}$ is uncertainty-free, while the error state contains both the process noise as well as the (unknown) offset $\theta_a$.

\subsubsection*{Reformulation of Constraints}
The constraints (\ref{eq:constraints_nominal}) on states and inputs, defined in (\ref{eq:generalized_InfOCP}), can be reformulated in terms of nominal and error states using (\ref{eq:dec_dyn}). Hence, substituting (\ref{eq:dec_dyn}) in (\ref{eq:generalized_InfOCP}), we can derive the corresponding constraints on the nominal states as
\begin{align}
    &Fs_t^{(j)} + Gv_t^{(j)} + (F + GK)e_t^{(j)}  \leq f, \label{eq:state_con}
    \end{align}
where error state $e_t^{(j)}$ follows the dynamics (\ref{eq:e_errorDyn}). Now, we need to ensure that constraints (\ref{eq:state_con}) are satisfied $\forall w_t^{(j)} \in \mathbb W$, in presence of the unknown offset $\theta_a$. Therefore, (\ref{eq:state_con}) are imposed for all models in the Feasible Parameter Set, i.e. $\forall \theta \in {
\Theta}^{(j)}$ and for all $t \geq 0$, in the $j^{\textnormal{th}}$ iteration. Thus, we modify considered error state dynamics (\ref{eq:e_errorDyn}) as 
\begin{align}\label{eq:err_dyn_modif}
e_{t+1}^{(j)} = \Psi e_{t}^{(j)} + d_t^{(j)},\quad e^{(j)}_0=0, ~\quad j=0,1,\ldots,
\end{align}
where $d_t^{(j)}=\{d: d = w + E\theta,\ \forall w \in \mathbb{W},\ \forall \theta \in \Theta^{(j)}$, and ensure
\begin{align*}
    Fs_t^{(j)} + Gv_t^{(j)} + (F + GK)e_t^{(j)}  \leq f,\ \ \forall d_t^{(j)} \in \mathbb{W} \oplus E\Theta^{(j)}. 
\end{align*}
% We model the domain of the additive offset parameter as a bounded polytopic set given by (\ref{eq: fps_defin}). We then, implement a robust MPC algorithm.  %iterative system \eqref{eq:uncSystem}.

\subsubsection*{Tractable Reformulation}
We now solve the following finite horizon robust optimal control problem at each time step $t$ of iteration $j$ 
\begin{equation}\label{eq:rob_LMPC}
	\begin{array}{clll}
		\hspace{0cm} V_{t\to t+N}^{\text{ALMPC},j}(s_t^{(j)}) = \\ [1ex]
		\displaystyle\min_{v_{t|t}^{(j)},\ldots,v_{t+N-1|t}^{(j)}} &  \displaystyle\sum\limits_{k=t}^{t+N-1} \ell \left( s_{k|t}^{(j)}, v_{k|t}^{(j)} \right) + P^{(j-1)}(s_{t+N|t}^{(j)}) \\[1ex]
		\text{s.t.}  & s_{k+1|t}^{(j)} = As_{k|t}^{(j)} + Bv_{k|t}^{(j)},\\[1ex]
% 		& Fs_{k|t}^{(j)} + Gv_{k|t}^{(j)} \leq f - \displaystyle\max_{e\in\mathcal E} \{\Phi e\},\\[1ex]
		& Fs_{k|t}^{(j)} + Gv_{k|t}^{(j)} \leq f - h_s^{(j)}, \\[1ex]
		& s_{t|t}^{(j)}= s_{t}^{(j)},\ s_{t+N|t}^{(j)} \in\mathcal{CS}^{(j-1)},\\[1ex]
		& k=t,\ldots,t+N-1,
	\end{array}
\end{equation}
where $h^{(j)}_s = \max \limits_{e_t \in \mathcal E^{(j)}} \Phi e_t$, $\Phi = (F+GK)$, and the set $\mathcal{E}^{(j)}$ is assumed to be the minimal robust positive invariant set for the error in (\ref{eq:err_dyn_modif}), for iteration $j$ \cite[Definition~3.4]{kouvaritakis2016model}. Terminal constraint $\mathcal{CS}^{(j-1)}$, and the terminal cost $P^{(j-1)}(\cdot)$ are discussed in details next. As $\Theta^{(j)}$ is updated using (\ref{eq: fps_defin}), the set $\mathcal{E}^{(j)}$ alters after every iteration. Therefore, the solved MPC problem  (\ref{eq:rob_LMPC}) is \emph{adaptive} in nature. Upon solving \eqref{eq:rob_LMPC}, the controller applies
\begin{equation}\label{eq:rob_LMPC_ctrLaw}
    u_{t}^{(j)}(x_t^{(j)}) = K(x_t^{(j)} - s_{t}^{(j)}) + v_{t|t}^{(j),\star}
\end{equation}
to the system (\ref{eq:uncSystem}), where $v_{t|t}^{(j),\star}$ is the first input from the optimal input sequence of \eqref{eq:rob_LMPC}. 

Now we discuss the component of \emph{Learning} in this adaptive MPC algorithm, which corresponds to an iterative performance improvement, i.e. decrease of closed loop cost from iteration to iteration. 
%%%%%%%%%%%%%%%

% Due to adaptation algorithm (\ref{eq: fps_defin}), this set also updates after every iteration. 
% The properties are elaborated in the following section.

\subsubsection*{Construction of Terminal Conditions}
In \eqref{eq:rob_LMPC} the nominal terminal state is constrained into the control invariant safe set $\mathcal{CS}^{(j-1)}$, and the terminal cost is denoted by $P^{(j-1)}(\cdot)$. These are borrowed from Learning MPC for deterministic systems \cite{rosolia2017learningj}, whose main idea we revisit next. At iteration $j$, let the vectors
\begin{subequations}\label{eq:sequence}
\begin{align}
 {\bf{v}}^{(j)} ~ = ~ [v_0^{(j)},~v_1^{(j)},~...,~v_t^{(j)},~...], \label{eq:sequenceU} \\
 {\bf{s}}^{(j)} ~ = ~ [s_0^{(j)},~s_1^{(j)},~...,~s_t^{(j)},~...], \label{eq:sequenceX}
\end{align}
\end{subequations}
denote the nominal input and state of system \eqref{eq:s_nominalDyn}. 
%s applied to system~\eqref{eq:dec_dyn} and the corresponding nominal state evolution.
To exploit the iterative nature of the control design, we
define the \emph{sampled Safe Set} $\mathcal{SS}^{(j)}$ at iteration $j$ as
\begin{equation}\label{eq:SS}
\begin{aligned}
\mathcal{SS}^{(j)} = \textrm{}\left\{\bigcup_{i \in M^{(j)}} \bigcup_{t=0}^{\infty} s_t^{(i)} \right\},
\end{aligned}
\end{equation}
where $M^{(j)}\subseteq\{0,\ldots,j\}$ is the set of indices associated with successful iterations, i.e., 
\begin{equation*}\label{eq:M}
 \begin{aligned}
 M^{(j)} = \textrm{} \Big\{ k \in [0,j] : \lim_{t \to \infty} s_t^{(k)} = 0 \Big\}.
 \end{aligned}
 \end{equation*}
In other words, $\mathcal{SS}^{(j)}$ is the collection of all nominal state trajectories up to iteration $j$ that have converged to the origin. We define the \emph{convex Safe Set} as
\begin{equation} \label{eq:CS}
	\begin{aligned}
		\mathcal{CS}^{(j)} = \text{conv}(\mathcal{SS}^{(j)}).
	\end{aligned}
\end{equation}
% \subsubsection{Terminal Cost}

Now, at time $t$ of the $j^{\textnormal{th}}$ iteration, the cost-to-go associated with the closed loop trajectory (\ref{eq:sequenceX}) and input sequence (\ref{eq:sequenceU}) is defined as

\begin{equation*}\label{eq:Functional}
	\begin{aligned}
		V_{t\rightarrow \infty}^{(j)}(s_t^{(j)}) = ~ \sum\limits_{k=0}^{\infty} \ell (s_{t+k}^{(j)},v_{t+k}^{(j)}),
	\end{aligned}
\end{equation*}
where $\ell(\cdot,\cdot)$ is the stage cost of problem (\ref{eq:rob_LMPC}). We define the barycentric function 
\begin{equation}\label{eq:barycentric}
	P^{(j)}(s) = \begin{cases}
		p^{(j),*}(s)  & \mbox{if } s \in \mathcal{CS}^{(j)} \\
		+\infty  & \mbox{else} \\
	\end{cases},
\end{equation} 
where 
\begin{subequations} \label{p^*}
	\begin{align*}
		& p^{(j),*}(s) =  \min_{\lambda_t \geq 0, \forall t \in [0, \infty)}  \sum_{k=0}^{j} \sum_{t=0}^{\infty}\lambda_t^{(k)} V_{t\rightarrow \infty}^{(k)}(s_t^{(k)})  \\
		& \text{s.t.} \qquad\ \sum_{k=0}^{j} \sum_{t=0}^{\infty}\lambda_t^{(k)} = 1,\ \sum_{k=0}^{j} \sum_{t=0}^{\infty}\lambda_t^{(k)} s_t^{(k)}= s, % \\
	\end{align*}
\end{subequations}
where $s_t^{(k)}$ is the nominal state at time $t$ of the $k^{\textnormal{th}}$ iteration, as defined in (\ref{eq:sequenceX}). The function $P^{(j)}(\cdot)$ assigns to every point in $\mathcal{CS}^{(j)}$ the minimum cost-to-go along the trajectories in $\mathcal{CS}^{(j)}$. Therefore this choice of terminal cost in (\ref{eq:rob_LMPC}) guarantees that at each iteration, the iteration cost is non-increasing \cite[Theorem~2]{rosolia2017learningj}.

Intuitively, the function $P^{(j)}(.)$ quantifies the performance of the closed loop trajectories in the previous iterations. This information from the previous iteration data is exploited to guarantee iterative performance improvement and therefore, the algorithm has a ``{learning} property".  
%%%%%%%%%%%%%%%%%%%%%%%%%%%%%%%%%%%%%%%%
\begin{algorithm}
    \caption{
Adaptive Learning MPC
    }
    \label{alg1}
        \begin{algorithmic}[1]
      \STATE Start with iteration $j$. Set $t=0$; initialize Feasible Parameter Set $\Theta^{(j)}$ and compute initial minimal robust positive invariant set $\mathcal E^{(j)}$ using algorithm in \cite[Chapter~3]{kouvaritakis2016model}. 
           
      \STATE Compute $v_{t|t}^{(j),\star}$ from (\ref{eq:rob_LMPC}) and apply $v^{(j)}_t = v_{t|t}^{(j),\star}$ to the system. 

    \STATE 
     Set $t = t+1$, and return to step~2 until the end of $j^{\textnormal{th}}$ iteration.
    
    \STATE At the end of $j^{\textnormal{th}}$ iteration, set the convex hull of the nominal states' as the convex safe set, $\mathcal{CS}^{(j)}$ for next iteration \cite{rosolia2017learningj}.
 
    \STATE Update $\Theta^{(j)}$ using (\ref{eq: fps_defin}). Set $j = j+1$. Return to step~1.
    \end{algorithmic}
\end{algorithm}

We formulate an optimal control problem to find a feasible trajectory which can be used to initialize the ALMPC. We make the following assumption for our system: 

\begin{assumption}\label{ass:rob_LMPC}
We assume we are given an initial state and input sequence $(\mathbf s^{(0)}, \mathbf v^{(0)})$ that satisfies the constraints \eqref{eq:s_nominalDyn} and $Fs_t^{(0)} + Gv_t^{(0)} \leq f - \max_{e\in\mathcal E^{(0)}}\{\Phi e\}$ for all $t\geq0$, where $\mathcal E^{(0)}$ is the minimal robust positive invariant set for \eqref{eq:err_dyn_modif} in the first iteration. 
\end{assumption}
%%%%%%%%%%%%%%%

\remark \label{rem:recfeas}
Since the system \eqref{eq:s_nominalDyn} is linear and the constraints are convex, every convex combination of the trajectories in the sampled safe set $\mathcal{SS}^{(j)}$ is a feasible trajectory, which steers the system to the origin. Therefore $\mathcal{CS}^{(j)}$ is a control invariant set for \eqref{eq:s_nominalDyn} \cite{rosolia2017learning}. This is required for proving recursive feasibility of MPC problem (\ref{eq:rob_LMPC}) in closed loop with controller (\ref{eq:rob_LMPC_ctrLaw}) \cite[Theorem~1]{rosolia2017learning}. 

\begin{remark}
\textcolor{black}{The decoupling of the state dynamics in (\ref{eq:dec_dyn}) is a linear change of coordinates, and is not necessary to implement the proposed strategy. Indeed, it is possible to reformulate the controller in the coordinate frame of system (\ref{eq:uncSystem}). In particular, using standard set theory tools \cite[Chaper 10]{borrelli2017predictive}, one  could tighten the constraints (\ref{eq:constraints_nominal}) and the terminal set $\mathcal{CS}$ (\ref{eq:CS}) to guarantee robustness against all disturbance realizations}.
\end{remark}
% \textbf{ADD a remark stating that if you had not used the error model, the set CS should have been "robustified".. Ugo knows what i am talking...}
 
%%%%%%%%%%%%%%%%%%%%%%%%%%%%%%

\section{Properties of Adaptation}\label{sec:advant_adap}
In this section we discuss 
the properties and associated advantages of model uncertainty adaptation (\ref{eq: fps_defin}), namely: $(i)$ decreasing constraint tightening, and $(ii)$ convergence guarantees.  

\subsection{Decreased Constraint Tightening}\label{ssec:mrpi_adap}
It is evident that in (\ref{eq:rob_LMPC}), the constraints on the nominal state $s$ are tightened based on the size of the set $\mathcal{E}^{(j)}$. The set $\mathcal E^{(j)}\subset\mathbb{R}^{n_x}$ the minimal robust positive invariant set for the error dynamics \eqref{eq:err_dyn_modif} in iteration $j$, which is defined as \cite[Eq.~3.23]{kouvaritakis2016model} 
\begin{align}\label{eq:mrpi_set}
\mathcal{E}^{(j)} = \bigoplus \limits_{i=0}^{\infty} \Psi^i(\mathbb{W} \oplus E\Theta^{(j)}),
\end{align}
which satisfies the condition that, for all $e_0^{(j)} \in \mathcal E^{(j)}$, we have $e^{(j)}_{t}(\mathbf w_{t-1}, \textcolor{black}{\theta})\in\mathcal E^{(j)}$ for all $\mathbf w_{t-1}=[w_0,\ldots,w_{t-1}]\in\mathbb W$, for all $\theta \in \Theta^{(j)}$ and all $t\geq1$. 
% In this section we discuss two key aspects of the above robust positive invariant set, namely: $(i)$ adaptation with data and size adjustment, and $(ii)$ real-time computability and convergence.
% \subsubsection*{Adaptation}
% where $\rho$ and $r$ satisfy $\Phi^r(\mathbb{W} \oplus E\Theta^{(j)}) \subseteq \rho(\mathbb{W} \oplus E\Theta^{(j)})$ \cite{kouvaritakis2016model}
\begin{lemma} \label{lem:invar}
The minimal robust positive invariant set $\mathcal{E}^{(j)}$, is monotonically decreasing with respect to iterations. That is, for successive iterations of the system, it follows the property $\mathcal{E}^{(j+1)} \subseteq \mathcal{E}^{(j)}$. 
\end{lemma}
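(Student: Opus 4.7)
The proof plan is essentially a monotonicity argument that pushes the nested inclusion of Feasible Parameter Sets through the construction of the minimal robust positive invariant set. The starting point is the observation, already established in Section~\ref{ssec:uncer_set_adap} via equation (\ref{eq: fps_defin}), that $\Theta^{(j+1)}\subseteq\Theta^{(j)}$: the update rule intersects the previous set with the new half-space constraints generated by the data collected during iteration $j$, and intersections of polytopes can only shrink the set.

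First, I would translate the parameter-set inclusion into an inclusion of disturbance sets. Since the map $\theta\mapsto E\theta$ is linear, $\Theta^{(j+1)}\subseteq\Theta^{(j)}$ implies $E\Theta^{(j+1)}\subseteq E\Theta^{(j)}$. Next, using the elementary fact that the Minkowski sum is monotone in each argument (if $A_1\subseteq A_2$ then $A_1\oplus B\subseteq A_2\oplus B$), this yields
\begin{equation*}
\mathbb{W}\oplus E\Theta^{(j+1)} \subseteq \mathbb{W}\oplus E\Theta^{(j)}.
\end{equation*}
I would then push this through the linear map $\Psi^i$ for each $i\geq 0$, using the fact that linear images preserve set inclusion, to obtain $\Psi^i(\mathbb{W}\oplus E\Theta^{(j+1)})\subseteq \Psi^i(\mathbb{W}\oplus E\Theta^{(j)})$ for every $i$.

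Finally, I would apply monotonicity of the Minkowski sum once more, this time across the (infinite) index $i$ appearing in the definition (\ref{eq:mrpi_set}), to conclude
\begin{equation*}
\mathcal{E}^{(j+1)} \;=\; \bigoplus_{i=0}^{\infty}\Psi^i\bigl(\mathbb{W}\oplus E\Theta^{(j+1)}\bigr) \;\subseteq\; \bigoplus_{i=0}^{\infty}\Psi^i\bigl(\mathbb{W}\oplus E\Theta^{(j)}\bigr) \;=\; \mathcal{E}^{(j)}.
\end{equation*}
There is no serious obstacle here; the only subtlety is that the infinite Minkowski sum defining $\mathcal{E}^{(j)}$ must be well-defined as a set, which relies on Schur stability of $\Psi=A+BK$. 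This is already tacitly assumed when invoking \cite[Definition~3.4]{kouvaritakis2016model} for the minimal robust positive invariant set, so no additional hypothesis is needed. I would close by noting that, as an immediate consequence, the constraint tightening $h_s^{(j)}=\max_{e\in\mathcal{E}^{(j)}}\Phi e$ is also monotonically non-increasing in $j$, which is the property used in Section~\ref{ssec:mrpi_adap} to justify the reduction in conservatism.
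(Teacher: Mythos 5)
Your proposal is correct and follows essentially the same route as the paper: establish $\Theta^{(j+1)}\subseteq\Theta^{(j)}$ from the intersection-based update (\ref{eq: fps_defin}) and then propagate the inclusion through the Minkowski-sum definition (\ref{eq:mrpi_set}) of $\mathcal{E}^{(j)}$. You simply make explicit the intermediate monotonicity steps (linear images and Minkowski sums preserve inclusion) that the paper leaves implicit.
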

\begin{proof}
New cuts are added to the Feasible Parameter Set polytope $\Theta^{(j)}$ according to (\ref{eq: fps_defin}) to obtain $\Theta^{(j+1)}$ at the next iteration. Namely, from (\ref{eq: fps_defin}), we have that
\begin{multline*}
    \Theta^{(j+1)} = \Theta^{(j)} \cap   \{\theta \in \mathbb{R}^p: x^{(j+1)}_{t}-Ax^{(j+1)}_{t-1}-Bu^{(j+1)}_{t-1}\\
    -E\theta \in \mathbb{W},  \forall t \geq 0\}.
\end{multline*}
Therefore, $\Theta^{(j+1)} \subseteq \Theta^{(j)}$. So from (\ref{eq:mrpi_set}) we get
\begin{multline}\label{eq:mrpi_set_next}
\mathcal{E}^{(j+1)} = \bigoplus \limits_{i=0}^{\infty} \Psi^i(\mathbb{W} \oplus E\Theta^{(j+1)}) \subseteq   \bigoplus \limits_{i=0}^{\infty} \Psi^i(\mathbb{W} \oplus E\Theta^{(j)}).
\end{multline}
This shows that $\mathcal{E}^{(j+1)} \subseteq \mathcal{E}^{(j)}$, and concludes the proof.
\end{proof}
The decreasing size of the minimal robust positive invariant set $\mathcal{E}^{(j)}$, clearly reduces constraint tightening in (\ref{eq:rob_LMPC}) for nominal states. That indicates a less conservative algorithm as more data becomes available. Now we prove recursive feasibility of the resulting ALMPC algorithm, for which Lemma~\ref{lem:invar} is a sufficient condition along with terminal constraints given by convex safe set $\mathcal{CS}$.

\begin{theorem}
Now consider system \eqref{eq:uncSystem} in closed loop with the ALMPC \eqref{eq:rob_LMPC} and \eqref{eq:rob_LMPC_ctrLaw}. Let Assumption~\ref{ass:rob_LMPC} and (\ref{eq:mrpi_set_next}) hold. Then  ALMPC \eqref{eq:rob_LMPC}, \eqref{eq:rob_LMPC_ctrLaw}  is feasible for all times $t\geq0$ and all iterations $j\geq1$. 
\end{theorem}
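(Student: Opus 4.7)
The plan is to argue recursive feasibility by a nested induction: an outer induction on the iteration index $j$, and within each iteration an inner induction on the time index $t$. The base case ($j=1$, $t=0$) follows from Assumption~\ref{ass:rob_LMPC}, which supplies a nominal trajectory $(\mathbf s^{(0)},\mathbf v^{(0)})$ satisfying the tightened constraints under $\mathcal E^{(0)}$. Since Lemma~\ref{lem:invar} gives $\mathcal E^{(1)}\subseteq \mathcal E^{(0)}$, the associated constraint tightening $h_s^{(j)}$ is no larger at iteration $1$ than at iteration $0$, so the first $N$ stages of $(\mathbf s^{(0)},\mathbf v^{(0)})$, together with $s_N^{(0)}\in\mathcal{CS}^{(0)}$ as the terminal state, are a feasible candidate for \eqref{eq:rob_LMPC} at $j=1$, $t=0$.

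For the inner induction, I would assume \eqref{eq:rob_LMPC} is feasible at time $t$ of iteration $j$ with optimizer $v_{t|t}^{(j),\star},\ldots,v_{t+N-1|t}^{(j),\star}$ and corresponding nominal states satisfying $s_{t+N|t}^{(j),\star}\in\mathcal{CS}^{(j-1)}$. I then construct a candidate for time $t+1$ by shifting: keep $v_{t+1|t}^{(j),\star},\ldots,v_{t+N-1|t}^{(j),\star}$ and append one additional input $\tilde v_{t+N|t+1}$ chosen so that the new terminal nominal state stays in $\mathcal{CS}^{(j-1)}$. Such a $\tilde v$ exists by Remark~\ref{rem:recfeas}, which states that $\mathcal{CS}^{(j-1)}$ is a control-invariant set for \eqref{eq:s_nominalDyn}. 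The shifted stages inherit satisfaction of $Fs + Gv \le f - h_s^{(j)}$ from feasibility at time $t$ (they are literally the same nominal states and inputs), while the appended stage satisfies the tightened constraints because $\mathcal{CS}^{(j-1)}\subseteq\mathcal{SS}^{(j-1)}$ is built from trajectories already satisfying those same (or tighter) constraints at earlier iterations, combined with Lemma~\ref{lem:invar}.

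For the outer induction step from $j$ to $j+1$, I would initialize at $t=0$ of iteration $j+1$ using the very trajectory that was applied at iteration $j$, which terminates in the origin. Two monotonicities are used: first, $\mathcal{CS}^{(j)}\supseteq\mathcal{CS}^{(j-1)}$ by definition \eqref{eq:SS}--\eqref{eq:CS}, so the terminal set only grows across iterations; second, $\mathcal E^{(j+1)}\subseteq\mathcal E^{(j)}$ by Lemma~\ref{lem:invar}, so $h_s^{(j+1)}\le h_s^{(j)}$ componentwise (as $\Phi e$ is a linear support over a shrinking set). Hence any nominal trajectory feasible for \eqref{eq:rob_LMPC} at iteration $j$ remains feasible at iteration $j+1$, starting the inner induction for iteration $j+1$.

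The main obstacle I anticipate is the careful verification that the constraint tightening $h_s^{(j)}$ is indeed monotonically non-increasing in $j$ in the componentwise sense required by \eqref{eq:rob_LMPC}, and that the appended control action $\tilde v_{t+N|t+1}$ produced from control invariance of $\mathcal{CS}^{(j-1)}$ simultaneously keeps the terminal state in $\mathcal{CS}^{(j-1)}$ and satisfies the tightened input constraint $Fs + Gv\le f - h_s^{(j)}$. The latter is not automatic from invariance alone; it follows because every extreme point of $\mathcal{CS}^{(j-1)}$ is a nominal state generated by a successful past iteration with tightening $h_s^{(i)}\ge h_s^{(j)}$ (by Lemma~\ref{lem:invar}), and convex combinations preserve linear constraint satisfaction.
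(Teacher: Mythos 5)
Your proposal is correct and follows essentially the same route as the paper: the paper likewise combines the monotonicity $\mathcal E^{(j+1)}\subseteq\mathcal E^{(j)}$ from Lemma~\ref{lem:invar} (so that $h_s^{(j+1)}\le h_s^{(j)}$ and previously stored trajectories remain feasible under the relaxed tightening) with the control invariance of $\mathcal{CS}^{(j-1)}$ from Remark~\ref{rem:recfeas}. The only difference is that you explicitly write out the shifted-candidate inner induction, which the paper delegates to the cited LMPC feasibility result.
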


\begin{proof} As shown in Lemma 1, $\mathcal{E}^{(j+1)} \subseteq \mathcal{E}^{(j)}$, this implies that 
\begin{equation}
    \max_{e \in \mathcal{E}^{(j+1)}} \Phi e \leq \max_{e \in \mathcal{E}^{(j)}} \Phi e.
\end{equation}
Consequently, if a realized trajectory at iteration $j$ robustly satisfies input and state constraints,
\begin{equation}
    Fs_{k|t}^{(j)} + Gv_{k|t}^{(j)} \leq f - h^{(j)}_s,\ h^{(j)}_s = \max \limits_{\forall e_t \in \mathcal E^{(j)}} \Phi e_t,
\end{equation}
then this trajectory would also robustly satisfy state and input constraints at the next iteration
\begin{equation}
    Fs_{k|t}^{(j+1)} + Gv_{k|t}^{(j+1)} \leq f - h^{(j+1)}_s,\ h^{(j+1)}_s = \max \limits_{\forall e_t \in \mathcal E^{(j+1)}} \Phi e_t.
\end{equation}
This implies that at each $j^{\textnormal{th}}$, iteration the first $j-1$ iterations robustly satisfy state and input constraints for system (\ref{eq:uncSystem}) and the disturbance $(\mathbb{W} \oplus E{\Theta}^{(j)})$. Therefore, using the fact from Remark~\ref{rem:recfeas} and \cite[Theorem 1]{rosolia2017learning}, the ALMPC is recursively feasible at each $j^{\textnormal{th}}$ iteration.
% \textcolor{blue}{Note that the system dynamics in (\ref{eq:s_nominalDyn}) and \eqref{eq:rob_LMPC} are identical. Moreover, the optimization problem \eqref{eq:rob_LMPC} is deterministic. Therefore recursive feasibility of ALMPC \eqref{eq:rob_LMPC}, \eqref{eq:rob_LMPC_ctrLaw} follows from \cite[Theorem 1]{rosolia2017learning}. \textbf{Write more. Use the above advantage. Conclusively prove feasibility here. Talk to Ugo and finalize this section.}} 
\end{proof}

\subsection{Convergence Guarantees}
In this section we first show that system \eqref{eq:uncSystem} in closed loop with the controller (\ref{eq:rob_LMPC_ctrLaw}), guarantees convergence to a neighborhood of the origin. Then, we show that the adaptive strategy allows at each iteration to shrink the guaranteed region of convergence around the origin.

\begin{theorem}\label{thm:RLMPC_stab}
Consider system \eqref{eq:uncSystem} in closed loop with the ALMPC \eqref{eq:rob_LMPC} and \eqref{eq:rob_LMPC_ctrLaw}. Let Assumption~\ref{ass:rob_LMPC} hold. Then the closed-loop system \eqref{eq:uncSystem}, \eqref{eq:rob_LMPC_ctrLaw} converges asymptotically to the set $\mathcal E^{j}$ for every iteration $j\geq1$ and all $w\in\mathbb W$, i.e $x_t^{(j)} \rightarrow \mathcal{E}^{(j)}$ as $t\rightarrow \infty$. 
\end{theorem}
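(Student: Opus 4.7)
The plan is to exploit the decomposition $x_t^{(j)} = s_t^{(j)} + e_t^{(j)}$ from \eqref{eq:dec_dyn} and prove separately that (i) the nominal state satisfies $s_t^{(j)} \to 0$ as $t \to \infty$, and (ii) the error state satisfies $e_t^{(j)} \in \mathcal{E}^{(j)}$ for all $t \geq 0$. Since $\mathrm{dist}(x_t^{(j)}, \mathcal{E}^{(j)}) \leq \|s_t^{(j)}\|$ whenever $e_t^{(j)} \in \mathcal{E}^{(j)}$, combining (i) and (ii) gives the claimed asymptotic convergence to $\mathcal{E}^{(j)}$.

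Step (ii) is the more mechanical of the two. Because $\theta_a \in \Theta^{(0)}$ by assumption and the set-membership update \eqref{eq: fps_defin} only eliminates parameters inconsistent with a realized noise in $\mathbb{W}$, the true offset $\theta_a$ remains in $\Theta^{(j)}$ at every iteration. Hence $d_t^{(j)} := E\theta_a + w_t \in \mathbb{W} \oplus E\Theta^{(j)}$. Starting from $e_0^{(j)} = 0$ and unrolling \eqref{eq:err_dyn_modif} yields
\begin{equation*}
e_t^{(j)} \;=\; \sum_{k=0}^{t-1} \Psi^{k}\, d_{t-1-k}^{(j)} \;\in\; \bigoplus_{k=0}^{t-1} \Psi^{k}\bigl(\mathbb{W} \oplus E\Theta^{(j)}\bigr) \;\subseteq\; \mathcal{E}^{(j)},
\end{equation*}
by the definition \eqref{eq:mrpi_set} of the minimal robust positive invariant set.

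Step (i) follows the Learning MPC argument of \cite[Theorem~2]{rosolia2017learningj} adapted to the tightened nominal problem \eqref{eq:rob_LMPC}. Recursive feasibility, together with Assumption~\ref{ass:rob_LMPC}, ensures \eqref{eq:rob_LMPC} is solvable at every time step. The terminal cost $P^{(j-1)}$ defined via the barycentric function \eqref{eq:barycentric} serves as a Lyapunov-like function: shifting the optimal predicted input sequence one step forward and appending a convex combination of tails of trajectories in $\mathcal{SS}^{(j-1)}$ that keeps the terminal nominal state inside the control invariant set $\mathcal{CS}^{(j-1)}$ (see Remark~\ref{rem:recfeas}) produces a feasible candidate at time $t+1$. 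Standard manipulations yield the descent inequality $V_{t+1 \to t+1+N}^{\text{ALMPC},j}(s_{t+1}^{(j)}) \leq V_{t \to t+N}^{\text{ALMPC},j}(s_{t}^{(j)}) - \ell(s_t^{(j)}, v_t^{(j)})$, and telescoping over $t$ together with positive definiteness of $\ell$ forces $s_t^{(j)} \to 0$.

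The main obstacle is step (i): one must verify that the LMPC descent argument remains valid with the iteration-dependent tightened constraints $Fs + Gv \leq f - h_s^{(j)}$ and with the iteration-dependent terminal set $\mathcal{CS}^{(j-1)}$. The key observation is that the tightening only shrinks the nominal feasible set without altering the construction of $\mathcal{CS}^{(j-1)}$ or $P^{(j-1)}$, and Lemma~\ref{lem:invar} together with the argument used in the recursive feasibility proof guarantees that the shift-and-append candidate stays admissible under the sharper constraint set at time $t+1$; the rest of the LMPC descent computation then goes through unchanged.
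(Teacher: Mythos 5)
Your proof is correct and takes essentially the same route as the paper: decompose $x_t^{(j)} = s_t^{(j)} + e_t^{(j)}$, obtain $s_t^{(j)} \to 0$ from the LMPC terminal-cost descent argument, and confine $e_t^{(j)}$ to $\mathcal{E}^{(j)}$ so that $x_t^{(j)} \to \mathcal{E}^{(j)}$. The paper's own proof is just a two-line appeal to \cite{rosolia2017learning} for the first half and an assertion of the second; you fill in both, including the step (left implicit in the paper) that $\theta_a \in \Theta^{(j)}$ for all $j$, hence $d_t^{(j)} \in \mathbb{W} \oplus E\Theta^{(j)}$ and the unrolled error sum lies in $\mathcal{E}^{(j)}$.
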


\begin{proof}
The proof is based on \cite[Theorem 1]{rosolia2017learning}: We have that $\lim_{t \rightarrow \infty} s_t^{(j)}=0, ~ \forall j \geq 1$ \cite{rosolia2017learning}. Finally, from the definition of $e_t = x_t^{(j)} - s_t^{(j)}$ we have $\lim_{t \rightarrow \infty} x_t^{(j)} = \lim_{t \rightarrow \infty} (s_t^{(j)} + e_t^{(j)})=\lim_{t \rightarrow \infty} e_t^{(j)} \in \mathcal{E}^{(j)}, ~ \forall j \geq 1$, which concludes the proof. 
\end{proof}

Now, from (\ref{eq:mrpi_set_next}), we know $\mathcal{E}^{(j+1)} \subseteq \mathcal{E}^{(j)}$. Hence, the ALMPC delivers improved guarantees of the region of convergence iteratively, as a consequence of update (\ref{eq: fps_defin}).

%%%%%%%%%%%%%%%%%%%%%%%%
\section{Simulation Results}\label{sec:example}

In this section we compare the results of our ALMPC algorithm in Algorithm~1 with the Robust Learning MPC (RLMPC) presented in \cite{Rosolia2017a}. For numerical simulations, we apply both ALMPC and RLMPC algorithms to compute feasible solutions to the following iterative infinite horizon optimal  control problem
\begin{equation}\label{eq:generalized_InfOCP_ex}
	\begin{array}{llll}
		\hspace{-0.cm}    & V^{(j),\star}(x_S)  =
		\displaystyle\min_{u_0^{(j)},u_1^{(j)}(\cdot),\ldots}  \displaystyle\sum\limits_{t\geq0} \left \| \bar{x}_t^{(j)} \right\|_1 + 10 \left\| u_t^{(j)} \right\|_1  \\[1ex]
		& \text{s.t.}\\
		&x_{t+1}^{(j)} = \begin{bmatrix} 1.2 & 1.5\\
		0 & 1.3\end{bmatrix}x_t^{(j)} + \begin{bmatrix} 0\\
		1\end{bmatrix}u_t^{(j)}(x_t^{(j)}) \\
		& ~~~~~~~~~~~~~~~~~~~~~~~~~~~~~~~+ \begin{bmatrix} 1 & 0\\ 0 & 1\end{bmatrix}\begin{bmatrix} 0.01 \\ 0.05\end{bmatrix} +  w_t^{(j)},\\[2.5ex]
		& \begin{bmatrix}-10 \\ -10 \\ -1
		\end{bmatrix} \leq \begin{bmatrix}x_t^{(j)} \\ u_t^{(j)}
		\end{bmatrix} \leq \begin{bmatrix}10 \\ 10 \\ 1
		\end{bmatrix},\ \forall w_t^{(j)} \in \mathbb W, \\[3.5ex]
		&  x_0^{(j)} = x_S,\ t=0,1,\ldots,
	\end{array}
\end{equation}
where $w_t \in \mathbb W = \{w \in \mathbb R^2: ||w||_{\infty} \leq 0.8 \}$. The initial Feasible Parameter Set is defined as 
\begin{align}\label{eq:init_FPS}
    \Theta^{(0)} = \{ \theta \in \mathbb{R}^2: \begin{bmatrix}-0.2 \\ -0.1 \\
	\end{bmatrix} \leq \theta \leq \begin{bmatrix}0.2 \\ 0.1
	\end{bmatrix}\}.  
\end{align}
The true offset parameter is $\theta_a = [0.01, 0.05]^\top$. The matrix $E \in \mathbb{R}^{2 \times 2}$ is picked as the identity matrix. We solve the above optimization problem for arbitrarily chosen initial state $x_S = [-5.6,1.29]^\top$. The ALMPC in \eqref{eq:rob_LMPC}, \eqref{eq:rob_LMPC_ctrLaw}, and the RLMPC \cite{Rosolia2017a} are implemented with a control horizon of $N=3$, and the feedback gain $K$ in \eqref{eq:rob_LMPC_ctrLaw} is chosen to be the optimal LQR gain for system \eqref{eq:s_nominalDyn} with parameters $Q=I$ and $R = 10$. 
% A feasible initial trajectory to initialize $\mathcal{CS}^0$ and $P^0(\cdot)$  was found by applying the rigid tube MPC method of \cite{kouvaritakis2016model}. 
% For each iteration $j$, the RLMPC schemes was terminated  at step $t$ whenever $V_{t\rightarrow t+ N}^{\scalebox{0.6}{RLMPC},(j)}(s_t^{(j)}) \leq 10^{-8}$.

In the case of RLMPC, the domain set for the unknown offset is kept constant throughout as $\Theta^{(0)}$. Contrarily, the ALMPC algorithm proposed in this paper, shrinks the domain of the offset set with (\ref{eq: fps_defin}). This adaptation lowers the degree of constraint tightening over iterations due to (\ref{eq:mrpi_set_next}).
% , and thus, lowers the conservatism of the algorithm.

\subsection{Uncertainty Adaptation in ALMPC}
We illustrate the adaptation of additive uncertainty in this section. Feasible Parameter Set is adapted after every iteration according to (\ref{eq: fps_defin}) after being initialized as (\ref{eq:init_FPS}). This is seen from Fig.~\ref{Fig:theta_ad}. It is clearly observed from Fig.~\ref{Fig:theta_ad} that the additive offset uncertainty is shrunk with every iteration. The true offset, which is picked solely for simulation purposes, is marked by \textcolor{blue}{$\star$} and is always inside the Feasible Parameter Set for all iterations. This underscores the fact that the robustification of the MPC problem (\ref{eq:rob_LMPC}) for all feasible offsets in the Feasible Parameter Set, also captures the dynamics with the true offset.     
\begin{figure}[h!]
    \centering
	\includegraphics[width= 1.05\columnwidth]{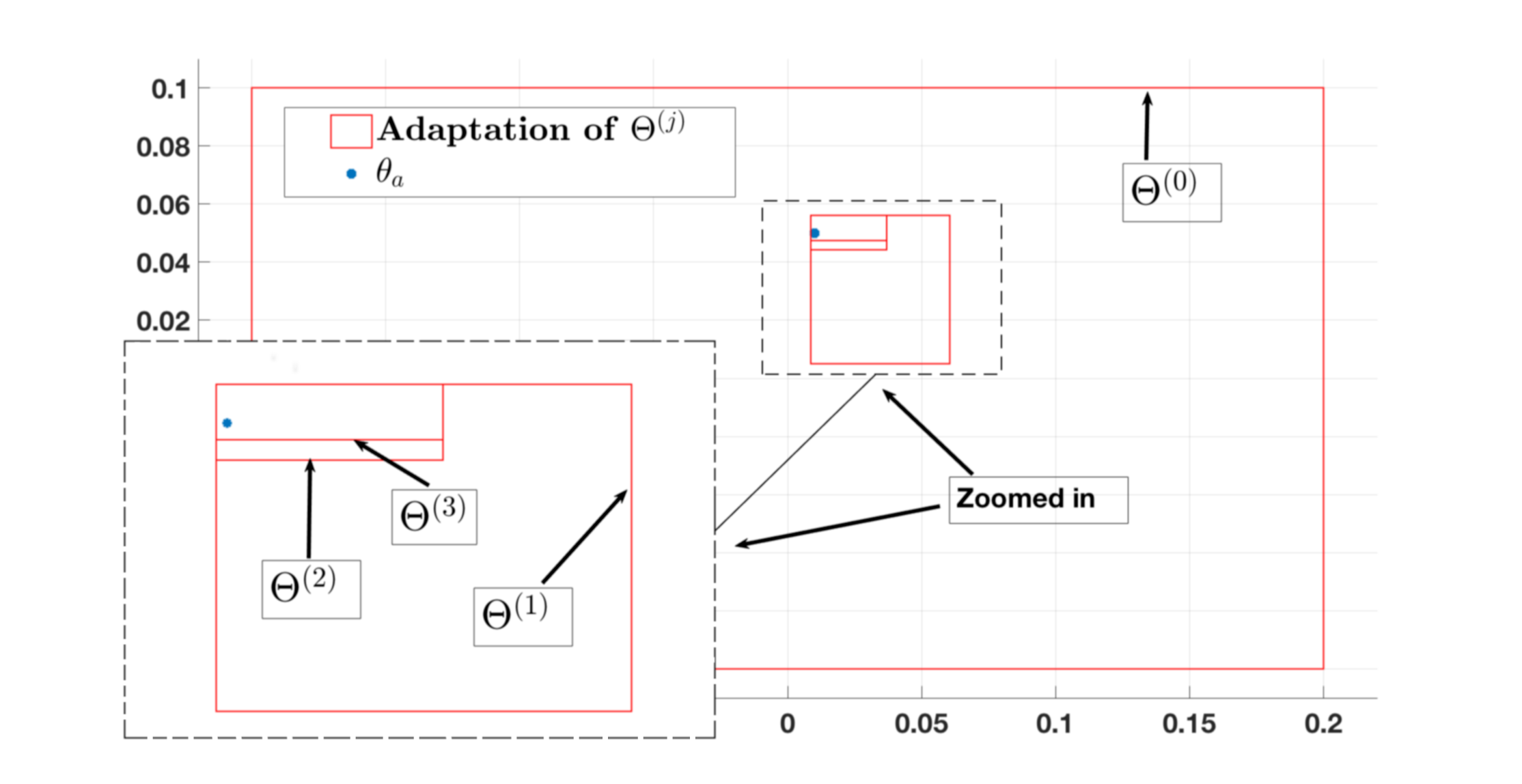}
    \caption{Adaptation of Feasible Parameter Set}
    \label{Fig:theta_ad}
\end{figure}

As a consequence of the above uncertainty adaptation, the size of the minimal robust positive invariant set $\mathcal{E}^{(j)}$ shrinks as in (\ref{eq:mrpi_set_next}). This is highlighted in Fig.~\ref{Fig:invar_ad}. Whereas for RLMPC, the size of the minimal robust positive invariant set for dynamics (\ref{eq:err_dyn_modif}) is constant. This keeps the same magnitude of constraint tightening throughout all iterations in (\ref{eq:rob_LMPC}).

\begin{figure}[h!]
    \centering
	\includegraphics[width= \columnwidth]{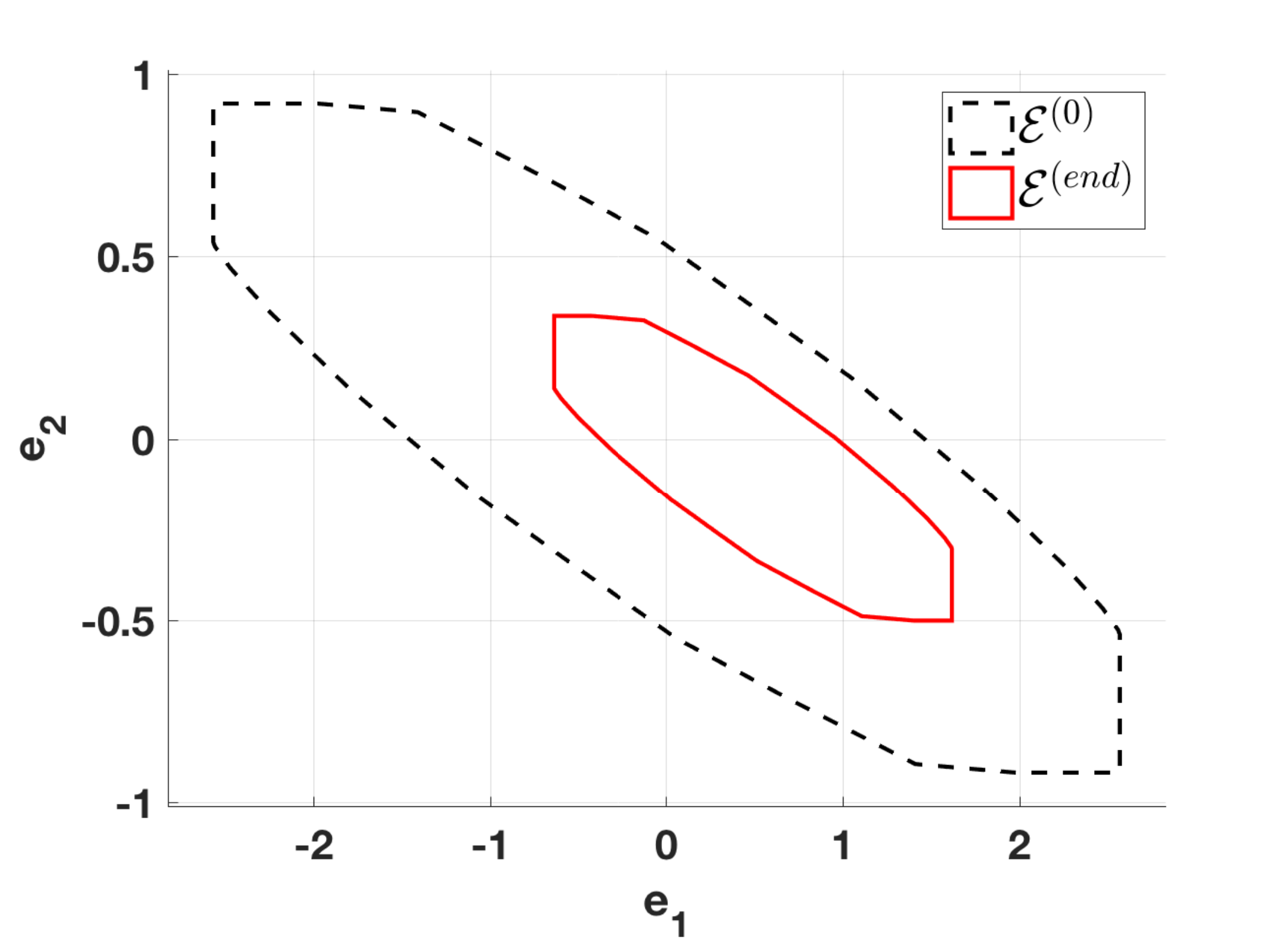}
    \caption{Adaptation of Minimal Robust Positive Invariant Set}
    \label{Fig:invar_ad}
\end{figure}

\subsection{Performance Gain Over RLMPC} In this section we compare the results of the ALMPC algorithm with RLMPC \cite{Rosolia2017a} to illustrate the performance improvement as a result of uncertainty adaptation, and thus, minimal robust positive invariant set adaptations as shown in Fig.~\ref{Fig:theta_ad} and Fig.~\ref{Fig:invar_ad}. We quantify this performance gain with two quantifiers:
\begin{itemize}
\item Trajectories with enhanced exploration of state space
\item Lower iteration costs
% \item Increased size of convex safe set for terminal constraint of nominal states
% \item Improved robust stability \textcolor{blue}{add a graph. talk to Ugo.} 
\end{itemize}

\begin{figure}[h!]
    \centering
	\includegraphics[width= \columnwidth]{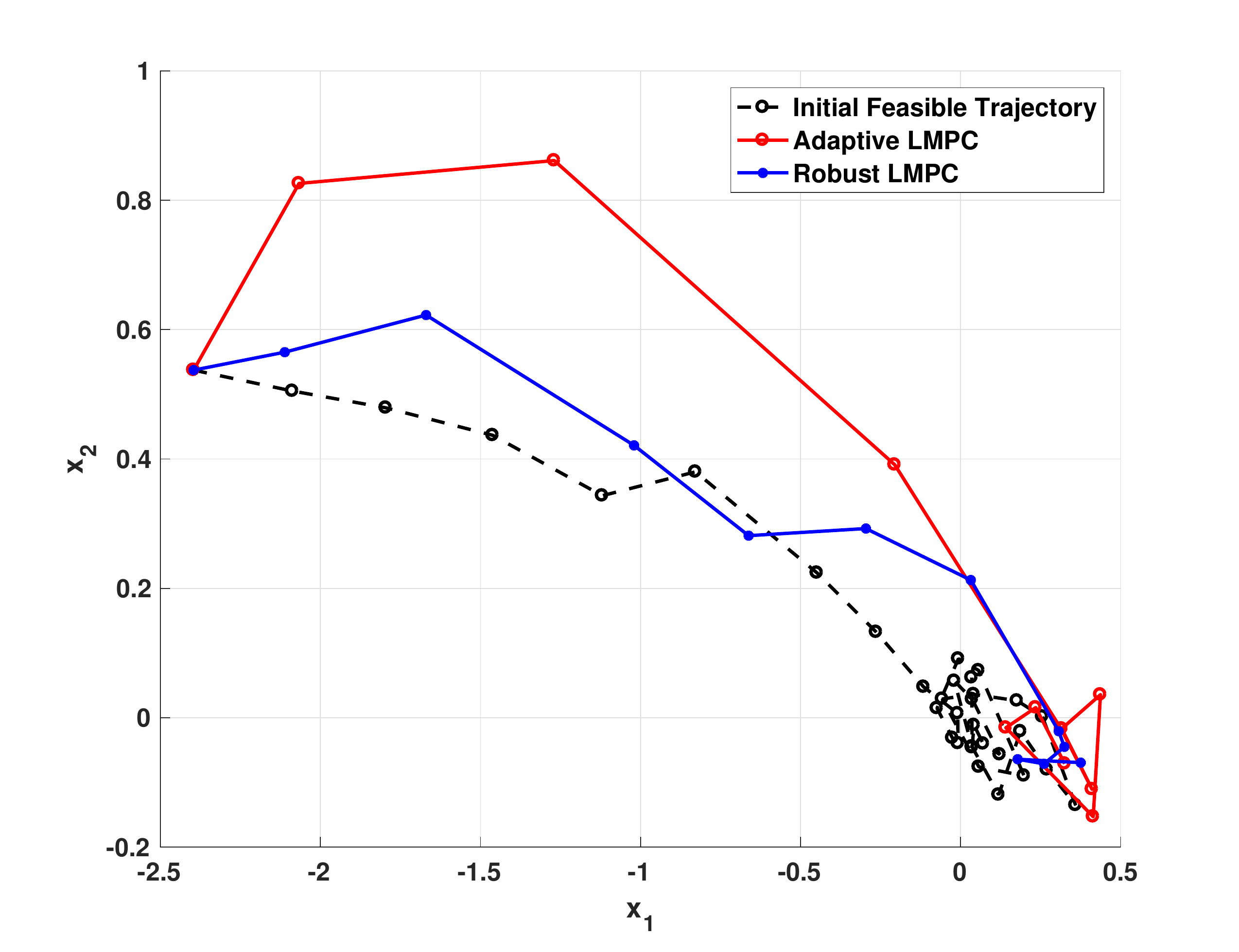}
    \caption{\textcolor{black}{Trajectory Comparison after Fifth Iteration}}
    \label{Fig:traj_comp}
\end{figure}

\begin{figure}[h!]
    \centering
	\includegraphics[width= \columnwidth]{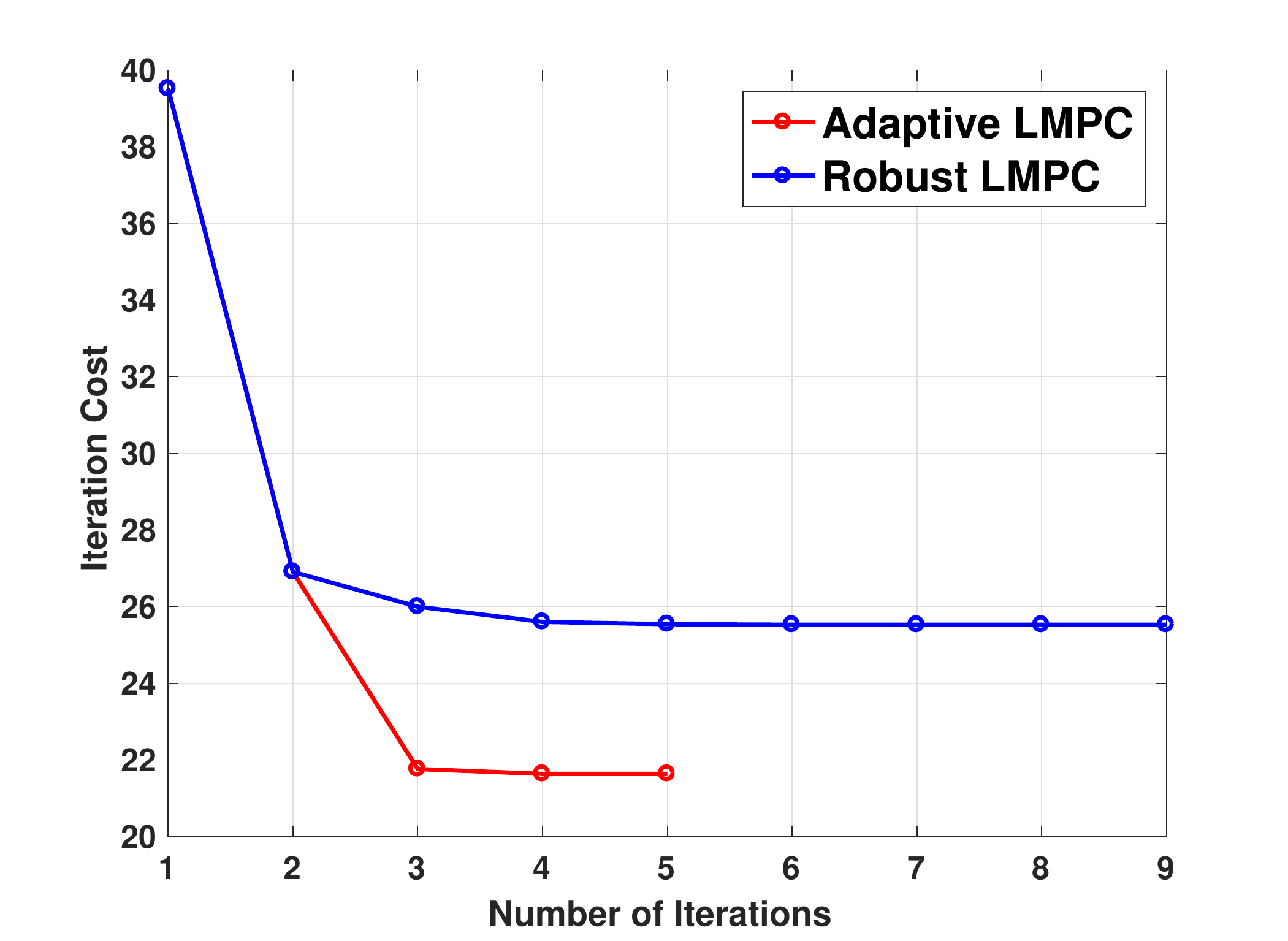}
    \caption{Iteration Cost Comparison}
    \label{Fig:iter_cost}
\end{figure}

% \begin{figure}[h!]
%     \centering
% 	\includegraphics[width= \columnwidth]{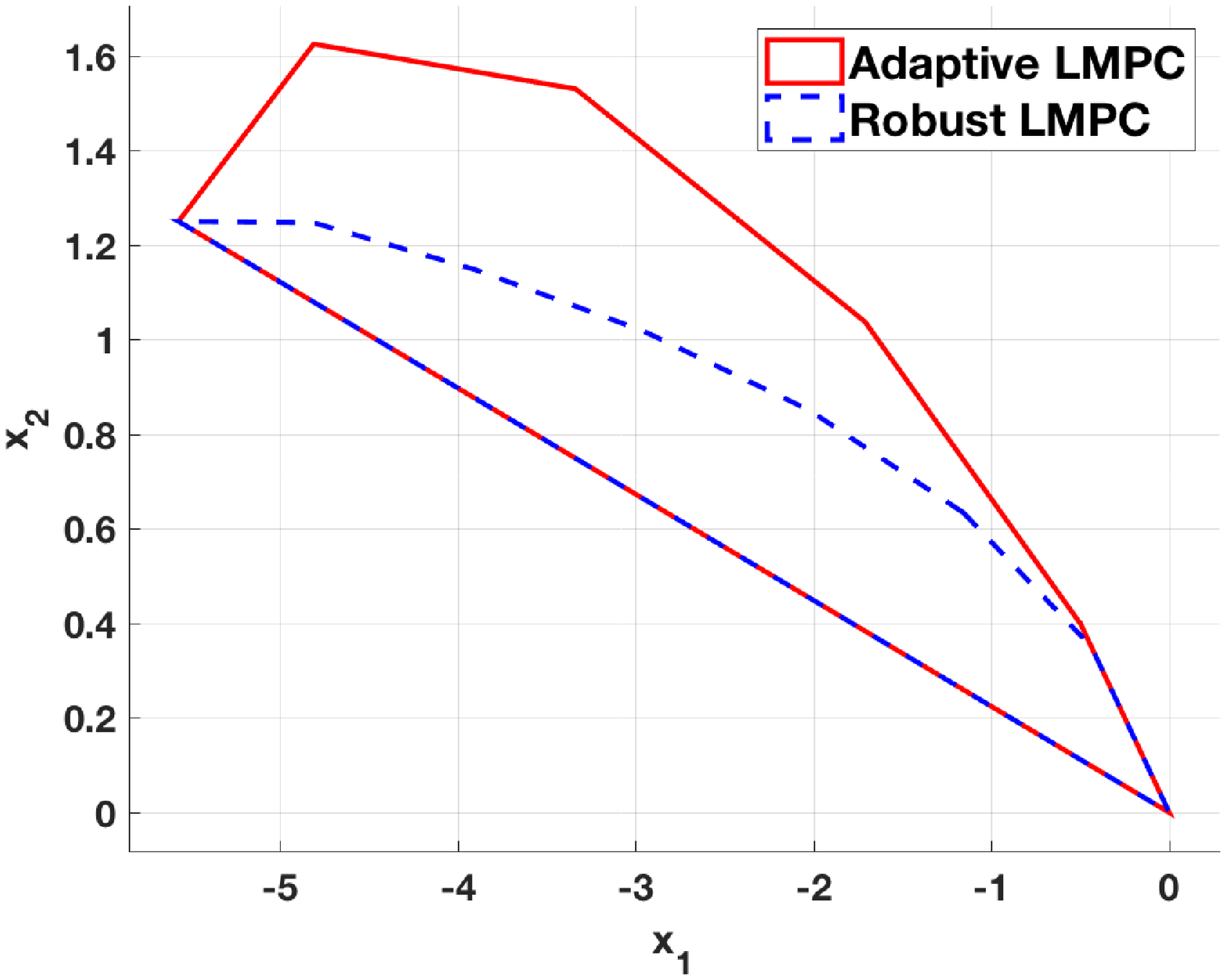}
%     \caption{Convex Safe Set Comparison}
%     \label{Fig:CSS_comp}
% \end{figure}
%%%%%%%%%%%%%%%%%%%%%%%%%%%%%%%%%
Fig.~\ref{Fig:traj_comp} shows the comparison of final system trajectories with both the algorithms. It is observed that the ALMPC algorithm generates a converged trajectory farther away from the initialized one, highlighting better exploration property. On the other hand, the RLMPC algorithm trajectory stays relatively closer to the initialized path, as it overestimates potential uncertainties by not adapting them. 

To prove that this aforementioned exploration advantage does not come at the penalty of additional iteration costs, we compare the iteration costs of the two algorithms in Fig.~\ref{Fig:iter_cost}. The ALMPC algorithm results in lower iteration costs, again proving the advantage of uncertainty adaptation.

% \textcolor{blue}{Finally in Fig.~ show improved robust stability properties. In RLMPC final convergence is to a set which is bigger, in ALMPC that set of convergence shrinks as $\mathcal{E}^{(j)}$ shrinks.} 
% for ALMPC.

\balance

%%%%%%%%%%%%%%%%%%%%%%
\section{Conclusion and future work}\label{sec:conclusion}
In this paper, we successfully merged an additive uncertainty adaptation methodology with the LMPC framework, to formulate the ALMPC algorithm. We proved that with every iteration, if the uncertainty is learned and thus shrunk, the amount of constraint tightening is lowered for nominal states. We have conclusively shown with numerical simulations, that the ALMPC framework, compared to RLMPC \cite{Rosolia2017a}, explores better trajectories with lower iteration costs as uncertainty is recursively adapted. 
% Moreover, the convex safe set for nominal states is larger with adaptation, again underscoring enhanced exploration abilities. 
Thus, ALMPC is shown to have improved performance over RLMPC. We also proved recursive feasibility and robust stability of the ALMPC algorithm. 

In future extensions of this work, we aim to solve an iterative  stochastic MPC problem with model uncertainty adaptation. The inherent uncertainty adaptation algorithm in presence of constraints, can also be exploited in systems as \cite{bujarbaruahlyap, bujarbaruahtorque} for obtaining better performance. Finally, we also wish to deal with time varying uncertainties in state-space, building on the work of \cite{2017arXiv171207548T}. 
% This would provide us a deeper understanding of model learning in controls, with theoretical feasibility and stability guarantees. 
%%%%%%%%%%%%%%%%%%%%%
%%%%APPENDIX%%%%%

% \section*{Appendix}

% \section*{ACKNOWLEDGMENT}
% The authors would like to ***

%%%%%%%%%%%%%%%%%%%%%%%%%%%%%%%%%%%%%%%%%%%%%%%%%%%%%%%%%%%%%%%%%%%%%%%%%%%%%%%%

\renewcommand{\baselinestretch}{0.930}
% saman commented out
% \bibliographystyle{IEEEtran}
% \bibliography{IEEEabrv,acc2018.bib} 
\renewcommand{\baselinestretch}{1}

% saman added
\section*{References}
{%\footnotesize 
\printbibliography[heading=none, resetnumbers=true]
}

\end{document}